\newtheorem{thm}{Theorem}
\def \ie{i.e., }
\def \eg{e.g., }
\def \fig{Fig. }
\def \eq{Eq. }
\def\bstctlcite{\@ifnextchar[{\@bstctlcite}{\@bstctlcite[@auxout]}}
\def\@bstctlcite[#1]#2{\@bsphack
  \@for\@citeb:=#2\do{%
    \edef\@citeb{\expandafter\@firstofone\@citeb}%
    \if@filesw\immediate\write\csname #1\endcsname{\string\citation{\@citeb}}\fi}%
  \@esphack}
\long\def\@makecaption#1#2{\ifx\@captype\@IEEEtablestring%
\footnotesize\begin{center}{\normalfont\footnotesize #1}\\
{\normalfont\footnotesize\scshape #2}\end{center}%
\@IEEEtablecaptionsepspace
\else
\@IEEEfigurecaptionsepspace
\setbox\@tempboxa\hbox{\normalfont\footnotesize {#1.}~~ #2}%
\ifdim \wd\@tempboxa >\hsize%
\setbox\@tempboxa\hbox{\normalfont\footnotesize {#1.}~~ }%
\parbox[t]{\hsize}{\normalfont\footnotesize \noindent\unhbox\@tempboxa#2}%
\else
\hbox to\hsize{\normalfont\footnotesize\hfil\box\@tempboxa\hfil}\fi\fi}
\begin{document}
\pagenumbering{gobble}
This work has been submitted to the IEEE for possible publication. Copyright may be transferred without notice, after which this version may no longer be accessible.
\newpage
\pagestyle{empty}

\bstctlcite{BSTcontrol}

\pagenumbering{arabic}
\title{ArSMART: An Improved SMART NoC Design Supporting Arbitrary-Turn Transmission}

\author{
        Hui~Chen,
        Peng~Chen,
        Jun~Zhou,
        Duong~H.~K.~Luan,
        and~Weichen~Liu

\thanks{W. Liu, H. Chen, J. Zhou and D. Luan are with the School of Computer Science and Engineering, Nanyang Technological University, Singapore. E-mail: (\{hui.chen, liu\}@ntu.edu.sg).}
\thanks{P. Chen is with the School of Computer Science and Engineering, Nanyang Technological University, Singapore,
and also with the College of Computer Science, Chongqing University, Chongqing, China.}}

\maketitle

\begin{abstract}
SMART NoC, which transmits unconflicted flits to distant processing elements (PEs) in one cycle through the express bypass, is a high-performance NoC design proposed recently. 
However, if contention occurs, flits with low priority would not only be buffered but also could not fully utilize bypass. 
Although there exist several routing algorithms that decrease contentions by rounding busy routers and links, they cannot be directly applicable to SMART since it lacks the support for arbitrary-turn (\ie the number and direction of turns are free of constraints) routing.
Thus, in this article, to minimize contentions and further utilize bypass, we propose an improved SMART NoC, called ArSMART, in which arbitrary-turn transmission is enabled.
Specifically, ArSMART divides the whole NoC into multiple clusters where the route computation is conducted by the cluster controller and the data forwarding is performed by the bufferless reconfigurable router. 
Since the long-range transmission in SMART NoC needs to bypass the intermediate arbitration, to enable this feature, we directly configure the input and output ports connection rather than apply hop-by-hop table-based arbitration.  
To further explore the higher communication capabilities, effective adaptive routing algorithms that are compatible with ArSMART are proposed. 
The route computation overhead, one of the main concerns for adaptive routing algorithms, is hidden by our carefully designed control mechanism.
Compared with the state-of-the-art SMART NoC, the experimental results demonstrate an average reduction of 40.7\% in application schedule length and 29.7\% in energy consumption.

\end{abstract}

\begin{IEEEkeywords}
SMART NoC, arbitrary-turn transmission, contention-minimized routing, bypassing, end-to-end latency.
\end{IEEEkeywords}

\section{Introduction}
\label{section:1-introduction}
\IEEEPARstart{W}{ith} the increasing number of processing elements (PEs) integrated into one chip, the communication between PEs becomes the bottleneck for performance improvement. 
Based on the modified Amdahl's law \cite{yavits2014effect} which considers the effect of communication and synchronization in multi-core systems, the communication bottleneck damps the speedup gained by parallelism and computation acceleration. 
To support high-speed communication among PEs, network-on-chip (NoC), as a widespread communication infrastructure for large-scale many-core systems, has been refined and evolved in recent works. 
SMART NoC \cite{krishna2013breaking}, which transmits unconflicted flits to distant PEs within one cycle through express long-distance bypass paths, is one of the most successful NoC designs. 
Experiments \cite{krishna2013breaking} show that if every flit is magically sent from the source to its destination by using the single-cycle long-distance path, up to 85\% application schedule length reduction can be achieved compared with state-of-the-art traditional NoCs. 
This is the ``ideal" performance that SMART provides, with the optimistic assumption of single-cycle source-destination paths for all flits.

However, in practice, the actual SMART NoC performance is far away from the ideal case since the single-cycle long-distance path can hardly be built for all flits since only the winner of the arbitration among multiple long path setup requests can set up long-range links. 
If one packet is blocked by other packets, its bypass is broken which degrades the benefits gained by SMART NoC. 
Besides, the long-range path establishment is costly due to additional pipeline stages and broadcast links. 
To reduce wire and energy overhead of original SMART NoC, novel designs \cite{chen2016reducing,perez2019smart++,SHARPNoC} are proposed.
Also, researchers try to reduce contentions from the task mapping \cite{yang2017task} and routing \cite{9045103} perspectives. 
The first work turns to task mapping which is limited by the availability of PEs and only performs well in homogeneous systems.
Peng et al. \cite{9045103} try to avoid contention through XY-YX routing with intermediate nodes. 
However, in such design, routes for messages are not fully flexible and constrained by the number of turns.
Thus, the contention issue is not fully addressed in aforementioned works.
A straightforward way to significantly reduce the contentions is to relax these routing constraints and enable the data transmission of arbitrary-turn paths.

The challenge for SMART NoC to support arbitrary-turn transmission is placed by its distributed decision-making mechanism. 
In the start router, the route for a packet is locally computed and then a SMART-hop setup request (SSR), which carries the route information, is broadcast to the downstream routers via dedicated repeated wires to establish bypass. 
This local decision-making mechanism limits the routing algorithm used in SMART NoC in two aspects. 
(\romannumeral1) With the limited area constraint and deadlock requirement, the current route computation module within the SMART NoC router is rather functionally limited, resulting in that only rule-based routing strategy (\eg XY), which is deterministic and only allows specific turns, is applied. 
(\romannumeral2) The SSR delivery is constrained by the dedicated wires or using specific SSR network \cite{chen2016reducing}, which does not support SSR transmission with arbitrary-turn.
Thus, even if we revise the original route compute unit and let it support arbitrary-turn transmission, e.g., using the table-based method, the constrained SSR delivery is not compatible with the arbitrary-turn transmission. 
To support the single-cycle long-distance transmission with arbitrary-turn, centralized or cluster-based design is needed. 
Also, inspired by that the optimal solution is easier to be derived based on global information instead of local information, the centralized or cluster-based method could manage NoC resource (i.e., routers and links) better. 

In this article, we propose a novel NoC design based on SMART NoC \cite{krishna2013breaking}, called ArSMART, which significantly decreases resource contentions and further fully utilizes bypass via our proposed mechanism of establishing arbitrary-turn paths.
The main contributions of our article are as follows: 
\begin{itemize}
  \item [1)]
  We develop an NoC design, ArSMART NoC, to set up single-cycle long-distance paths and support arbitrary-turn data transmission, which significantly reduces resource contentions.
  Specifically, ArSMART divides the whole NoC into multiple clusters where the route computation is conducted by the cluster controller and the data forwarding is performed by the bufferless reconfigurable router. 
  \item [2)]
  We present corresponding routing algorithms that enable ArSMART to manage NoC resources efficiently. 
 Specifically, we conduct the route computation to generate a route before they demand at runtime, considering the real-time network state.
 The challenge to design routing algorithms for ArSMART is the difference of network states used in route computation and actual transmission. 
 Our algorithms manage to minimize such impact and lessen contentions to improve NoC performance.
  \item [3)]
  We implement the ArSMART design and matched routing algorithms in Gem5 \cite{binkert2011gem5}, and conduct a full system simulation to show their effectiveness. 
  Compared with the state-of-the-art SMART NoC, the experimental results demonstrate an average reduction of 40.7\% in application schedule length and 29.7\% in energy consumption.

\end{itemize}

The rest of this article is organized as follows: Section \ref{section:motivation} provides examples to illustrate our motivations. Section \ref{section:Definition} summarizes the notations we used in this article and presents the problem definition. The details of our design together with the proposed routing algorithms for different cases are shown in Section \ref{section:3-architecture}. To prove the efficiency of our proposed design, evaluations on performance, area and power are presented in Section \ref{section:5.2-experiment}. Finally, Section \ref{section:2-relatedwork} discusses related works and Section \ref{section:6-conclusion} concludes the article.
\begin{figure*}[t!]
\centering
\includegraphics[width=7.2in]{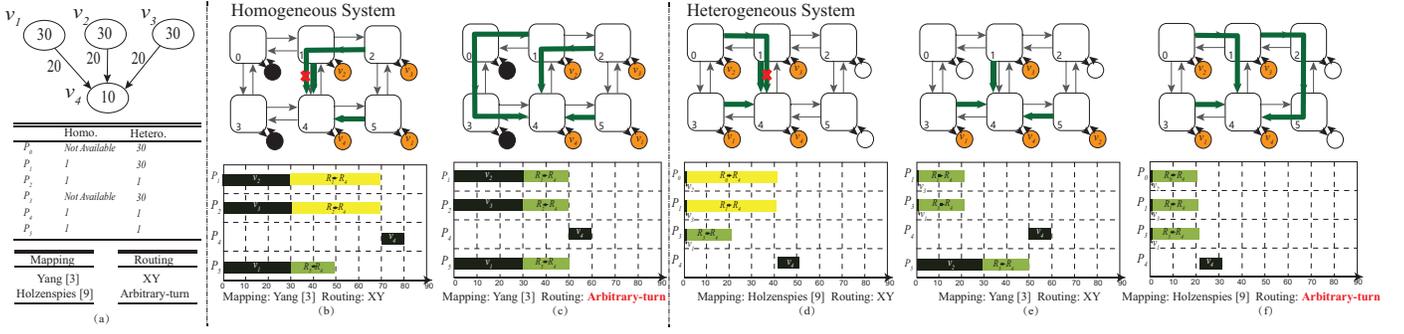}
\vspace{-25pt}
\caption{Motivation examples. (a). DAG modeled application and processing rate of different PEs; (b). Communication-aware mapping and XY routing in homogeneous system; (c). Communication-aware mapping and arbitrary-turn routing in homogeneous system; (d). Computation-aware mapping and XY routing in heterogeneous system; (e). Communication-aware mapping and XY routing in heterogeneous system; (f). Computation-aware mapping and arbitrary-turn routing in heterogeneous system.}
\label{figure:example}
\end{figure*}
\begin{center}
\begin{figure}
\includegraphics[width=3.5in]{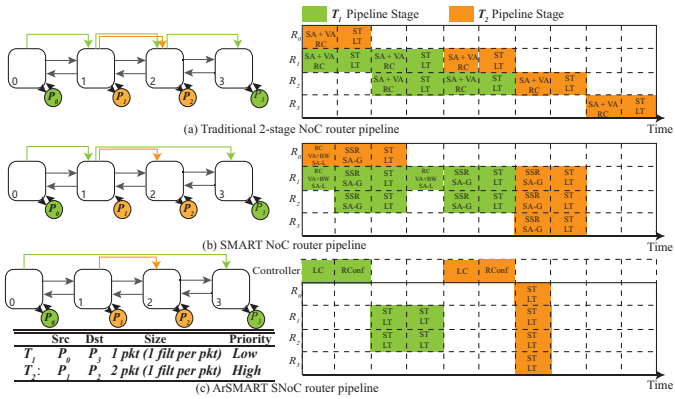}
\caption{Illustration of traditional, SMART and ArSMART NoC timeline.}
\label{figure:timeline}
\end{figure}
\end{center}
\section{Motivation}
\label{section:motivation}
In this section, we motivate the benefits of supporting arbitrary-turn transmission and cluster-based resource management through the following examples. 

Compared with XY routing applied in SMART NoC, arbitrary-turn transmission can fully utilize NoC resources under the same mapping strategy.
Given the task graph and its mapping in \fig \ref{figure:example}(a), the given application is represented as a directed acyclic graph (DAG). 
For each node $v \in \mathcal{V}$, its task workload is indicated using the number inside the node, and for each edge $e_{u, v} \in \mathcal{E}$ from task $u$ to task $v$, its message size is represented by the number beside the edge. 
The processing rate of different PEs is listed in the table of \fig \ref{figure:example}(a). 
Researchers proposed communication-aware task mapping algorithms \cite{yang2017task} to minimize contentions, in which up to 44.1\% improvement in communication efficiency can be achieved by minimizing contention for SMART NoC. 
However, even cooperated with this task mapping algorithm, the XY routing would encounter contentions in \fig \ref{figure:architecture}(a) due to the limitation of PEs' availability. Totally, 80 time units are consumed as shown in \fig \ref{figure:example}(b). 
If arbitrary-turn routing is applied, only 60 time units are needed in \fig \ref{figure:example}(c). 

For the heterogeneous system, arbitrary-turn routing algorithms can cooperate with computation-aware mapping to get the optimal performance for both computation and communication. 
With the same task graph in the previous example, we apply two mapping algorithms, the communication-aware mapping algorithm proposed in \cite{yang2017task} and the computation-aware mapping presented in \cite{holzenspies2010run}. 
Under the XY routing, when applying computation-aware mapping, the timeline is shown in \fig \ref{figure:example}(d). 
Due to the contention, the total schedule length is 51 time units. 
As shown in \fig \ref{figure:example}(e), if the communication-aware mapping algorithm and XY routing are applied, even if no contention occurs, the total schedule length is 60 time units due to its prolonged task execution time. 
However, if the proposed arbitrary-turn routing and computation-aware mapping are applied, the schedule length is reduced to 31 time units, as indicated in \fig \ref{figure:example}(f). 

Our arbitrary-turn routing design applies cluster-based resource management and removes per-router arbitration. 
To guarantee there is no contention during the transmission, ArSMART NoC blocks low-priority messages at the source. 
The benefit we can gain from such design is shown in \fig \ref{figure:timeline}. 
Generally, the traditional NoC router processes each flit through 5 stages \cite{dally2004principles}: route computation (RC), virtual channel allocation (VA), switch allocation (SA), switch transmission (ST) and link transmission(LT). The state-of-the-art research \cite{park2012approaching} shows that these 5 stages can be pipelined as shown in Fig. \ref{figure:timeline}(a).
As illustrated in Fig. \ref{figure:timeline}(b), the SA stage in the SMART router contains two steps: switch allocation local (SA-L) and switch allocation global (SA-G). 
Given two messages $T_1$ and $T_2$ and their information as listed in Fig. \ref{figure:timeline}. Since $T_1$ can bypass Router 2 in SMART NoC, the total transmission time is shortened compared with traditional NoC. 
However, since the bypass of $T_1$ is interrupted by $T_2$, the bypass is broken and needs to be set up again. 
If we force $T_1$ to wait at the source, both messages can benefit from the long-range path, and the total time is decreased to 7 cycles. 
Moreover, we note that this is an extreme example, \eg the message size is no more than 2 packets. 
If the path can be used for the messages consisting of more packets, the path configuration overhead is shared further. 
\vspace{-10pt}
\section{Problem Definition}
\label{section:Definition}
In this section, we will define the problem and the objective this article targets. 

\subsubsection{Application}
An application is represented by a directed acyclic graph (DAG) $G=(\mathcal{V}, \mathcal{E})$, where $\mathcal{V}$ is the set of the computation tasks and $\mathcal{E} \subseteq \mathcal{V} \times \mathcal{V}$ is the set of data transmission between tasks. 
For each node $v \in \mathcal{V}$, the task workload is represented by $w_{v}$, and for each edge $e_{u, v} \in \mathcal{E}$ from $u$ to $v$, its message is represented by $m_{e}$. 
The set of all messages is notated using $\mathcal M$. Each message $m_e = \left\{p_1,p_2,...,p_i,...,p_j\right\}$ consists of $j$ packets, and each packet $p_i =\left\{f_1,f_2,...,f_i,...,f_k\right\}$ consists of $k$ flits. 

\subsubsection{Architecture}
Formally, the 2D mesh NoC-based SoC is formed of $N \times N$ PEs and routers. 
The PE in $i^{th}$ row and $j^{th}$ column is denoted by $c_{ij}$ ($(c_{ij} \in \mathcal{C}$ and $\mathcal{C}=\left\{c_{1,1}, c_{1,2}, \ldots, c_{N,N}\right\})$). 
The processing rate of $c_{ij}$ is denoted by ${s_{c_{ij}}}$.   
The router in $i^{th}$ row and $j^{th}$ column is denoted by $r_{ij}\left(r_{ij} \in \mathcal R\right.$ and $\left.\mathcal R=\left\{r_{1,1}, r_{1,2}, \ldots, r_{N,N}\right\}\right)$.

\subsubsection{Mapping Algorithm}
Application mapping $\mathcal{F}$ is a function from tasks $\mathcal{V}$ to processors $\mathcal {C}$. $\mathcal{F}(v)=c$ represents the mapping of task $v$ onto processor $c$. 
Based on $w_v$ and $s_c$, the execution time $q_{v,c}$ for task $v$ on processor $c$ is estimated by $q_{v,c}=w_v/s_c$. 
We note that, due to the existence of branch operations, the $q_{v,c}$ value, estimated in the design time, may not be the same as the actual execution time in the run time. 

\subsubsection{Routing Algorithm}
The routing algorithm is a function from messages $\mathcal {M}$ to routers $\mathcal {R}$. 
$\mathcal{G}(m)=\gamma$ represents message $m$ transmitting over the route $\gamma$. 
The route $\gamma = \{r_{src}, \ldots, r_{dst}\}$ is a set of routers that forward this message from the source to the destination. 
In the distributed NoC system, to reduce the route computation overhead and avoid deadlock, constraints are added on the route, \eg flits need to traverse in X direction at first and then Y direction for XY routing. 
We use notations $\mathcal G_w$ and $\mathcal G_{w/o}$ to represent the routing algorithms with or without constraints. 
 Also, we use the item ``arbitrary-turn route'' to notate the routing algorithm without any constraints.

\subsubsection{Packet End-to-End Latency}
For the distributed NoC, the end-to-end latency ${L}_{e2e}$ of a packet consists of head flit transmission latency ${L}_\textit{head}$, serialization latency ${L}_\textit{seri}$ and contention latency ${L}_\textit{ct}$, as shown in \eq \ref{E.q.:end2end}.
\begin{equation}
  \begin{split}
{L}_{e2e} = {L}_\textit{head} + {L}_\textit{seri} + {L}_{ct}
\label{E.q.:end2end}
  \end{split}
\end{equation}
Generally, in traditional hop-by-hop traversal NoCs, flits are forwarded hop by hop. The end-to-end latency of a packet for traditional NoCs ${L}_{e2e}^{C}$ can be formulated as follows. 
\begin{equation}
  \begin{split}
{L}_{e2e}^{C} = L_r \cdot (|\gamma|-1) + L_w \cdot |\gamma| + L_w \cdot (|f|-1) + {L}_{ct}
\label{E.q.:traditionalLatency}
  \end{split}
\end{equation}
As shown in \eq \ref{E.q.:traditionalLatency}, $L_r$ and $L_w$ are the router-stage delay and propagation delay between two adjacent routers, respectively. 
$|\gamma|$ represents the number of routers of the route from the source to the destination; $|f|$ refers to the number of flits of a packet. 
For SMART NoC, due to the bypassing of intermediate routers, the end-to-end latency ${L}_{e2e}^{S}$ is represented by \eq \ref{E.q.:SMARTLatency}. 
Where $|ct|$ and $|limit|$ are the bypass broken overhead suffered from contention and limitation of $\textit{HPC}_\textit{max}$.  
From the formula, we find that the SMART latency is affected by contention in two aspects: the extra blocking latency and the prolonged head flit transmission latency caused by bypass break. 
\begin{equation}
\label{E.q.:SMARTLatency}
\begin{aligned}
{L}_{e2e}^{S}
&= 2 \times\left(L_{r}+L_{w}\right)+(|ct|+|limit|) \times\left(L_{r}+L_{w}\right) \\
&+\left(|f|-1\right) \times L_{w}+ {L}_{ct}
\end{aligned}
\end{equation}
For our proposed ArSMART NoC, since the route is configured by the controller directly, head flit carrying the route information is unnecessary, which means ${L}_{head}$ used to set up route is replaced with route configuration time, ${L}_{conf}$. 
Then, the data can be transmitted from the source to the destination costing ${L}_\textit{tr}$. 
For the contention delay, our method eliminates the contention at intermediate routers. 
Instead, an additional delay at the source ${L}_{cs}$ is added. The latency to transmit a message using our design ${L}_{m}^{ar}$ is represented by \eq \ref{E.q.:messageLatency}. We denote the latency without contention using ${L}_{w/oc}$.
\begin{equation}
\label{E.q.:messageLatency}
\begin{aligned}
{L}_{m}^{ar} = {L}_\textit{conf} + {L}_\textit{tr} + {L}_\textit{cs} = {L}_\textit{w/oc}+{L}_\textit{cs}
\end{aligned}
\end{equation}
Note that in our design, the path is built at message level rather than packet level. 
Thus, the configuration time is shared by multiple packets as ${L}_{conf} / |p|$. 
Finally, the end-to-end latency of a packet for our design ${L}_{e2e}^{ar}$ is represented by \eq (\ref{E.q.:ourLatency}).
\begin{equation}
\label{E.q.:ourLatency}
\begin{aligned}
{L}_{e2e}^{ar} &= {L}_\textit{conf}/|p| + |limit| \times\left(L_{r}+L_{w}\right)\\
&+\left(|f|-1\right) \times L_{w} + {L}_\textit{cs}/|p|
\end{aligned}
\end{equation}
\begin{table}
    \caption{Notations Used in This article}
    \renewcommand{\arraystretch}{1.0}
    \arrayrulewidth=0.85pt
    \tabcolsep 3pt
    \centering
    \footnotesize
    \begin{tabular}[c]{c  c}
        \hline
        \hline
        {\bf Notation }&\bf{Description}\\
        \hline
        $\mathcal{V}$, $\mathcal{E}$  $\mathcal{M}$ &  The set of tasks nodes, edges and messages. \\
        $m,p,l$ & The notation of message, packet and flit. \\
        $|.|$ & The number of elements.\\
        $\mathcal{C}$, $\mathcal {R}$& The set of processing elements and routers. \\
        $r_{xy}$& The router in $x^{th}$ row and $y^{th}$ column. \\
        $q_{v,c}$& The execution time of $v$ on $c$. \\
        $\mathcal{F},\mathcal{G}$& The mapping and routing algorithm. \\
        $\gamma_m$& The route to transmit data of message $m$. \\
        ${L}$ & The latency related to transmission. \\
        $\tau_{m}$ & The priority of message $m$.\\
        \hline
        
    \hline
    \end{tabular}
    \label{notations}

\end{table}
In Table \ref{notations}, we summarize notations we used throughout this article. We have presented message latency of our design in \ref{E.q.:messageLatency}. 

The objective of our designs is to minimize the latency for each message, \ie $Min({L}_{m}^{ar})$. Specifically, using efficient control mechanism, we tried to minimize ${L}_\textit{conf}$. The objective of our routing algorithm is to find the route which minimize the ${L}_\textit{cs}$, \ie $\mathop{\arg\min}\limits_{\gamma}{L}_{cs}$

\begin{center}
\begin{figure}
\includegraphics[width=3.5in]{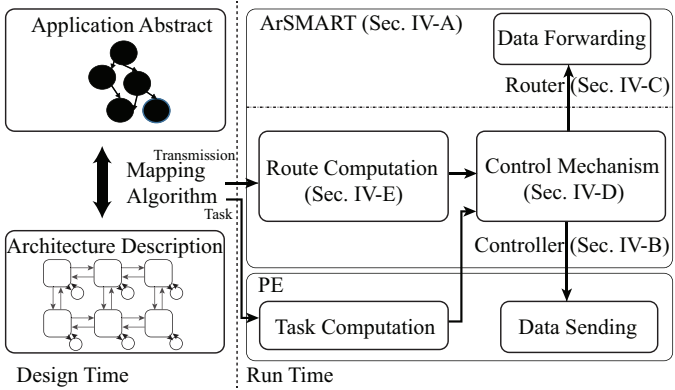}
\caption{Illustration of ArSMART NoC context.}
\label{figure:context}
\vspace{-15pt}
\end{figure}
\end{center}

\begin{figure*}[t!]
\centering
\includegraphics[width=7.4in]{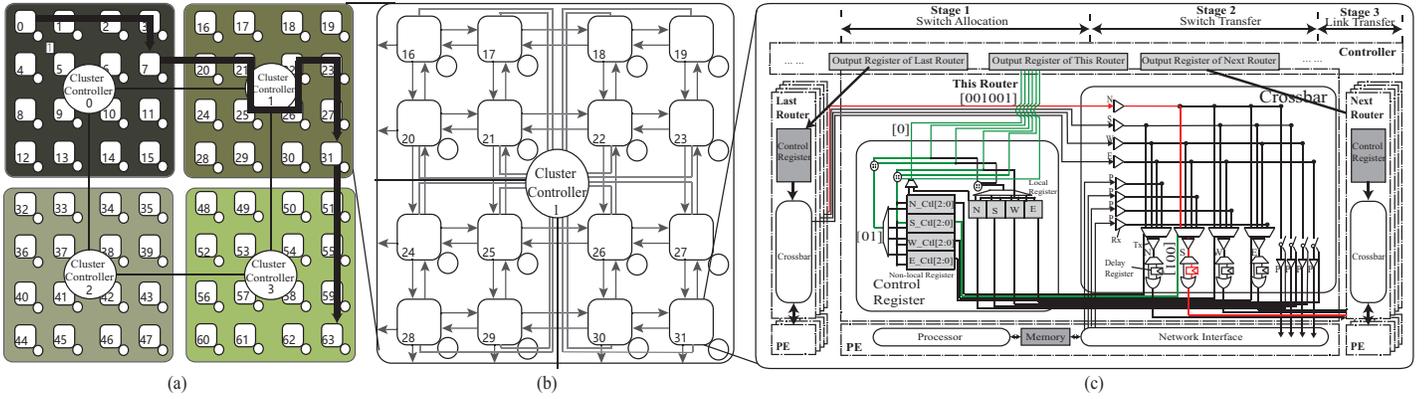}
\vspace{-20pt}
\caption{ArSMART NoC Design (a). Overview of ArSMART; (b). Cluster structure; (c). Router design.}
\label{figure:architecture}
\end{figure*}
\section{Proposed ArSMART NoC}
\label{section:3-architecture}
The hardware-software co-design flow of ArSMART NoC is presented in Fig. \ref{figure:context}. 
For a given application abstract and specific architecture description, the mapping algorithm decides the piece of code every PE should execute. 
After task mapping, the task graph integrated with mapping information is generated. 
Such task graph briefly describes task information which includes the codes' partition and location as well as transmission information which describes the source and destination of messages. 
With task and transmission information, the routing computation and task computation can be conducted concurrently to cover the route computation overhead of adaptive routing algorithms. 

The ArSMART NoC mainly consists of two components, router and controller. 
Assisted by the control mechanism, the configuration of each router generated in the controller can be accurately executed in routers, then the data from the source can be forwarded to the destination precisely.
In following subsections, we will detail our proposed design.
\vspace{-10pt}
\subsection{Design Overview}
Fig. \ref{figure:architecture} demonstrates our proposed NoC.
The whole NoC is separated into multiple clusters which consist of several routers and one cluster controller.
An illustrative example of 16 routers in one cluster is given in Fig. \ref{figure:architecture}(a).
The cluster controller connects to every router within the cluster using a point-to-point control link as shown in Fig. \ref{figure:architecture}(b). 
To configure inter-cluster transmission paths, each cluster controller is connected to its adjacent controllers by 16-bit wires.
The link state within one cluster is collected by the cluster controller.

We note that ArSMART can be scalable to any mesh-size NoCs by applying proper cluster size and the number of clusters. 
However, the cluster size is limited due to the control signal distribution and transmissions it can process. 
The maximum distance can be traversed within one cycle is limited, i.e., 8 mm at 1 GHz \cite{krishna2013breaking}. 
If the cluster controller is placed at the center, the maximum cluster size is 8$\times$8. 
If the memory size for the controller is 10 MB and each thread consumes 10 KB, the total number of messages the controller can process is 1024, which is enough for 64 PEs. 

The main process to transmit a message in ArSMART is summarized as follows. After tasks are mapped to processors, the route for one message is computed.
When the route for this communication request is demanded and allowed by all required controllers, the cluster controller configures the corresponding routers directly. Then the single-cycle multi-hop bypass path is established successfully. After the transmission, the path is released and the corresponding link state is updated as free. 
No local arbitration is needed during this process.
\vspace{-10pt}
\subsection{Controller Design}
\vspace{-15pt}
\begin{center}
\begin{figure}
\includegraphics[width=3.5in]{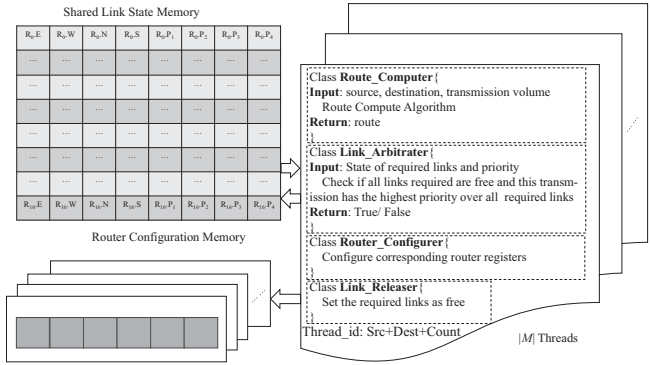}
\vspace{-20pt}
\caption{Illustration of controller.}
\label{figure:controller}
\vspace{-10pt}
\end{figure}
\end{center}
The controller is responsible for route computing, link arbitration and link state updating.
In this article, we do not limit the specific implementation of the controller. 
One possible solution is that the controller is one of the PEs. 
Fig. \ref{figure:controller} demonstrates our software design for the controller. 

For each message, there is one thread responsible for it. 
The thread id is the combination of the source, destination and the number of messages sending from the same router. 
Four main functions of a thread are: \emph{route computation}, \emph{link arbitration}, \emph{router configuration} and \emph{link release}. 
(\romannumeral1) Based on the source and destination information, the thread computes the route for this message. 
(\romannumeral2) After task execution finishes, the thread checks whether this message has the highest priority among all requested links. 
(\romannumeral3) If the checking result is true, this thread configures the routers based on the computed route. 
(\romannumeral4) When communication finishes, the thread updates the corresponding link state. Details of these functions we will illustrate in the following sections.

The controller has a shared memory that stores the link state (i.e., busy/free). The size of shared memory equals to $|R_n| \times |ports|$, where $|R_n|$ is the number of routers in this cluster and $|ports|$ is the number of ports for each router. 
To ensure that every entry of shared memory in the controller is accessed by at most one thread simultaneously, we present our synchronization mechanisms (i.e., mutual exclusion). 
If one thread wants to occupy a link and change the link state from free to busy, it should win the priority arbitration. To perform the priority arbitration, we create a priority queue for each link. 
Every message requests for one link is inserted into the queue base on its priority. 
In our design, we apply the first-come-first-serve policy, which means the communication firstly requesting the link has the highest priority. 
When priority arbitration is performed, the corresponding priority queues return the first item as the result.
Since only one message would win the priority arbitration for one link, the link state is updated by one thread. 
For low overhead, the resource arbitration is non-preemptive so that link arbitration is not conducted for one link if the required link is taken by another message already.
After the message finishes, the thread would update the link state from busy to free. 
Since the link is occupied by one message, only one thread would change the value of the link state. 
We note that the cluster controller connects to every router in its cluster. 
Thus, configuration information can be sent to corresponding routers simultaneously.
\subsection{Configurable Router Design}
\begin{figure}
\centering
\includegraphics[width=3.1in]{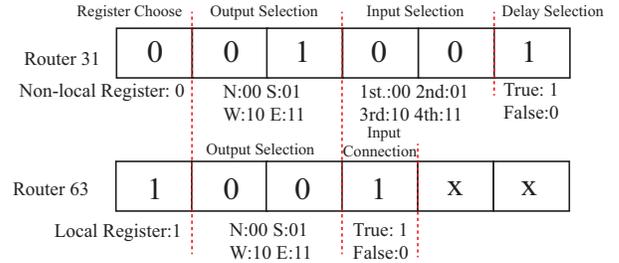}
\caption{Configuration Decoding.}

\label{figure:signalForm}
\vspace{-15pt}
\end{figure}
The router design is detailed in Fig. \ref{figure:architecture}(c). The 6 bits configuration signal, \textit{router-configure}, sent from controller are decoded and stored in corresponding registers as illustrated using the green lines in Fig. \ref{figure:architecture}(c).
The configuration register consists of two separate registers. 
One register with 4 entries (3 bits for each entry), named non-local register, links the input port(s) to non-local ports (\ie north, south, east, west ports). 
The other register with 4 entries (1 bit for each entry). named local register, configures the input port(s) to the local processor.
ArSMART uses a 6-bit control signal to change the content of these two registers. 
One of the two registers is selected by the first bit.
If the non-local register is chosen, the following 2 bits are used to select 1 entry and the last 3 bits are stored in that entry. 
The last 1 bit is used to control the delay register. 
Otherwise, the following 2 bits are used to choose 1 entry and only the next 1 bit is stored in that entry. 
The processors can send and receive data from all directions simultaneously since it connects to all non-local ports. 
To decrease wire delay \cite{krishna2013breaking}, the Rx and Tx asynchronous repeaters are used. Considering the limitation of $\textit{HPC}_\textit{max}$, ArSMART deploys delay registers with the size of 1 flit in each input port to temporarily hold data.

In the example of transmitting data from 0 to 63, we configure router 31 and let it temporarily hold the data in delay register and forward data from N port to S port in the next cycle, as shown using the red line in Fig. \ref{figure:architecture}(c). The entry 01 of the non-local register should be configured as 001, as indicated in Fig. \ref{figure:signalForm}.
Another example is the configuration of router 63. 
By setting the first bit to 1, we select the local register. 
Since the connection between N port and local port should be built, the entry for N port is chosen and set as 1, as shown in Fig. \ref{figure:signalForm}.
With this configuration coding, any connection between input ports and output ports can be established. 

\begin{figure}[h]
\centering
\includegraphics[width=3.1in]{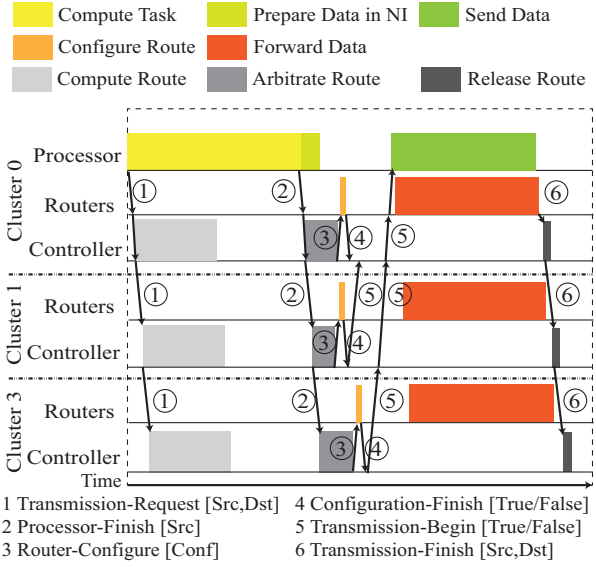}

\caption{Control mechanism for path establishment and release.}
\vspace{-10pt}
\label{figure:communication}
\end{figure}

\subsection{Control Mechanism}
We propose a C4R (\ie compute, check, configure, communicate and release operations) mechanism to support arbitrary-turn transmission without any local decision making. 
In the following, we will introduce such mechanism. 
Without loss of generality, we use Fig. \ref{figure:communication} to detail the control process to transmit the message from Router 0 to 63 as illustrated in Fig. \ref{figure:architecture}(a). 

\noindent  $\bullet$
\textbf{Compute.} 
This process computes the route for a message. 
After task mapping, the source and destination of messages are known.
At the beginning of the task execution, for one message generated from this task, the \textit{transmission-request} signal (12 bits) with the source and destination information is sent to the cluster controller. 
If more than one transmission request is submitted, the controller will process them using multiple threads. 
Task computation and route computation are conducted at the same time, hiding the route computation overhead. 
Time for route computation may be longer than task execution time, which will be discussed in the next section.
In the example in Fig. \ref{figure:architecture}(a), the destination is beyond the control of its cluster controller. 
The first controller sets a ``temporary destination'' at the boundary of the cluster and sends the communication request whose ``temporary source'' is this ``temporary destination'' to another cluster controller as illustrated in Fig. \ref{figure:communication}. 
This step would be conducted until the ``temporary source'' and destination are in the same cluster.
The ``temporary destination'' is chosen randomly among the available boundary routers $r_b$ with {$r_{{src}_{x}} \leq r_{b_{x}} \leq r_{{dst}_{x}}$ and $r_{{src}_{y}} \leq r_{b_{y}} \leq r_{{dst}_{y}}$}. In the example, router 7 and 31 are boundary routers.
Considering the limitation of $\textit{HPC}_\textit{max}$, flits would latch router(s) whose hop count from the source or ``temporary source'' is multiple times of $\textit{HPC}_\textit{max}$ and continue to transmit in the next cycle.
Due to the concern about $\textit{HPC}_\textit{max}$, filts should latch boundary routers to let the count of hops for other clusters start from 0.

\noindent  $\bullet$ \textbf{Check.}
A message is eligible to transmit if and only if it can win all required links. 
The processor sends the \textit{processor-finish} (6 bits) which includes the source information to the controller as long as it finishes task execution. 
The transmission path for one message is exclusive, meaning that this path cannot be used by other messages simultaneously. 
For one message, the controller checks whether this message has the highest priority among all requested links.
If the transmission is beyond the cluster, the cluster needs to forward the \textit{processor-finish} request to other corresponding clusters.
The controller only checks links within its cluster. In \fig \ref{figure:architecture}(a), controller 0 checks links used to transmit data from router 0 to 7 and the link connected to the east port of router 7.
Note that if one message finds one link it needs is taken by another message, it would not request any links along its route until that message ends the transmission and release the path.

\noindent  $\bullet$ \textbf{Configure.}
If all required links are available after the ``check'' process, routers along the assigned path are configured by \textit{router-configure} signal (6 bits) which we have discussed previously. 
For the inter-cluster case, the corresponding cluster controller(s) would configure routers if the checking result is true.
Since the cluster controller connects with all routers, it can send the configuration information to all related routers simultaneously. In \fig \ref{figure:architecture}(a), controller 1 gets the true result at first, and then it configures router 20, 21, 25, 26, 22, 23, 27, 31 within one cycle.
After routers finish the configuration, they send the \textit{configuration-finish} signal (1 bits) to its cluster controller. 
We note that the case that more than one message configures the same router at the same time exists. 
However, such additional delay caused by router configuration is limited since this case rarely happens. Also, since there are 5 input ports and 5 output ports only, at most 5 cycles are needed to configure a router.
Updating the link state promptly is required in our system. 
Before routers are configured by the controller, corresponding link states are updated as busy, so that the other messages cannot transfer data along this path.

\noindent  $\bullet$ \textbf{Communicate.}
After routers within the cluster are configured correctly, the controller sends the \textit{transmission-begin} signal (1 bits). 
In the inter-cluster case, the local cluster would send the \textit{transmission-begin} signal to the source cluster. The source cluster sends the \textit{transmission-begin} signal to the processor if and only if it collects all \textit{transmission-begin} signals from required clusters. In \fig \ref{figure:architecture}(a), the controller 0 sents the \textit{transmission-begin} signal to the source PE after collecting \textit{transmission-begin} signals from controller 1 and 3. 
Then the source PE begins to send data.

\noindent  $\bullet$ \textbf{Release.}
After router 0 transmitting the last flit, the links from router 0 to router 63 should be released. 
In the first cycle, router 0 sends the \textit{transmission-finish} signal (12 bits) with the source and destination of this message to cluster controller. 
After the cluster controller receives this signal, it checks this transmission beyond this cluster so it forwards the \textit{transmission-finish} to other corresponding cluster controllers (cluster 1 and cluster 3).
Finally, all related controllers update their link state correctly.

In our system, since the route is configured by the controller directly, head flit is unnecessary, which means ${L}_{head}$ used to set up the path is replaced by route configuration time, ${L}_{conf}$. 
The route configuration time is $Max(0,2 \times( L_{cn}+L_{rc})+L_{rls}-L_{pre})$, where $L_{cn}$ is delay cycles to coordinate clusters this path involves in; $L_{rc}$ refers to delay cycles caused by router configuration and release. The maximum $L_{rc}$ is 5 as we discussed before; $L_{pre}$ is the data preparation cycles in NI. ${L}_{conf}$ can be overlapped by $L_{pre}$. For a given path, the maximum configuration time can be computed by: $2 \times( |cn|+5)+|cn|$, where $|cn|$ is the number of clusters this path involves in.

\vspace{-10pt}
\subsection{Routing Algorithm}
We have presented our C4R transmission mechanism in the previous section. 
To fully utilize the NoC resources, we propose the corresponding routing algorithms. 
As mentioned before, we advance the route computation to the start of task execution to cover the route computation overhead. 
Note that, if too many messages are added to the task graph, the competition for the resource is inevitable.
Instead of competing at intermediate routers, blocking the low-priority transmission at the source can separate different transmissions without any extra delay. We try to decrease such blocking delay at the source using our routing algorithms.

Generally, due to the existence of branch operations, the task execution time is unknown beforehand, as described in \cite{ravindran2014scheduling}, and only the source and destination of messages are given. 
Suppose we can estimate the message size $|m|$ at the design time, like the case shown in \cite{shen2010dynamic}.
Without knowing the exact transmission start time, we use a greedy strategy to compute the proper route for each message.
The objective of the greedy strategy can be customized to meet different needs.

At the routing algorithm start time point $t$, we use $\mathcal M_t$ to denote the set of messages which have been assigned routes and have not completed their transmission (\ie are transmitting or waiting for transmission). 
In our algorithm, we only consider messages in $M_t$ since the other messages either are uncertain in route so that contention cannot be computed or have finished so that have no influence on the current network state. 

For a message $m_i$, our algorithm try to find a route $\gamma$ which minimizes the estimation upper bound of the blocking latency which suffers from messages in $M_t$. 
In Theorem \ref{them:case1}, we prove that the upper bound of blocking latency at the source $m_i$ suffers from $m_j$ ($j \neq i$, $m_j \in \mathcal M_t$) is $L_{w/oc_{m_j}}$.
\begin{thm}
\label{them:case1}
Given a message $m_i$, the blocking latency at the source that $m_i$ suffers from $m_j$ ($\gamma \cap \gamma_{m_j} \neq \emptyset$, $j \neq i$, $m_j \in \mathcal M_t$) along $\gamma$, $L_{cs_{m_i \leftarrow m_j,\gamma}}$, is upper-bounded by: \\ 
\begin{center}
\vspace{-15pt}
$L_{cs_{m_i \leftarrow m_j,\gamma}} \leq L_{w/oc_{m_j}}$
\end{center}
\end{thm}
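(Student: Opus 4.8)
The plan is to bound the worst-case waiting time of $m_i$ at its source by reasoning about how long the conflicting message $m_j$ can occupy any link shared between the two routes $\gamma$ and $\gamma_{m_j}$. The key observation, which follows directly from the C4R mechanism and the non-preemptive, first-come-first-serve arbitration described in the controller design, is that once $m_i$ discovers that some link on its route is held by $m_j$, it stops requesting links entirely and simply waits until $m_j$ finishes its transmission and runs its \textbf{Release} step. Therefore the blocking $m_i$ suffers from $m_j$ is at most the remaining lifetime of $m_j$'s path occupancy, measured from the instant $m_i$ would otherwise have been eligible to transmit.

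First I would set up the timeline: let $t_0$ be the cycle at which $m_i$ becomes eligible (its source task has finished and it issues the \textit{processor-finish}/check), and suppose at that moment $m_j$ already holds a link on $\gamma \cap \gamma_{m_j}$. Since arbitration is non-preemptive and FCFS, $m_i$ cannot grab any link on its route until $m_j$ releases, so $L_{cs_{m_i \leftarrow m_j,\gamma}}$ is at most the time from $t_0$ until $m_j$ vacates that shared link. Next I would upper-bound that quantity by the entire contention-free latency $L_{w/oc_{m_j}}$ of $m_j$: in the absolute worst case, $m_j$ seizes the shared link at the very cycle $t_0$ (or $m_i$'s eligibility coincides with the start of $m_j$'s configuration), and $m_j$ then needs its full path-establishment-plus-transmission time, which by \eq \ref{E.q.:messageLatency} restricted to the no-contention component is exactly $L_{w/oc_{m_j}} = L_{conf} + L_{tr}$ for $m_j$. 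Any scenario in which $m_j$ started earlier, or releases the shared link before finishing its whole transmission, only shortens the wait, so $L_{w/oc_{m_j}}$ is a valid upper bound.

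The delicate point — and the step I expect to be the main obstacle — is justifying that $m_j$'s own contention delay $L_{cs_{m_j}}$ does not need to be added to the bound, i.e. that $m_i$ only waits on $m_j$'s \emph{useful} occupancy time and not on $m_j$'s own upstream stalls. This is where the structure of the mechanism matters: while $m_j$ is itself blocked at its source (waiting for some $m_k$), $m_j$ holds \emph{no} links, because link states are only flipped to busy in the \textbf{Configure} step, which happens strictly after $m_j$ wins its check. Hence during $m_j$'s source-blocking interval the shared link is free, and if $m_i$ is eligible then, $m_i$ — not $m_j$ — would win it under FCFS, contradicting the premise that $m_j$ is the blocker. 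So conditioned on $m_j$ being the message that blocks $m_i$, $m_j$ must already be (or be about to be) in its configure/communicate phase, and the residual time is bounded by $L_{w/oc_{m_j}}$. I would phrase this as a short case analysis on whether $m_j$ has issued its \textbf{Configure} at time $t_0$.

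Finally I would assemble these pieces into the inequality, being careful to account for the $|limit|$ latching overhead (the $\textit{HPC}_\textit{max}$ constraint) and the serialization term $(|f|-1)\cdot L_w$ per packet as part of $L_{tr}$, so that the "full transmission time of $m_j$" I invoke genuinely equals $L_{w/oc_{m_j}}$ as defined in \eq \ref{E.q.:messageLatency} rather than some packet-level quantity. The remaining work is bookkeeping: confirm that releasing happens within a bounded number of cycles after the last flit (the \textbf{Release} step costs $L_{rls}$, already folded into $L_{conf}$), and that the controller coordination cycles $L_{cn}$ for $m_j$ are likewise inside $L_{w/oc_{m_j}}$. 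No single computation here is hard; the whole argument rests on the "no links held while source-blocked" invariant, which is the load-bearing observation.
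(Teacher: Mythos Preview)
Your proposal is correct and follows essentially the same reasoning as the paper: both proofs hinge on the invariant that a source-blocked message neither holds nor requests links, so $m_j$'s own contention delay $L_{cs_{m_j}}$ cannot propagate to $m_i$. The paper organizes the case analysis by relative priority ($\tau_{m_j}$ versus $\tau_{m_i}$) rather than by whether $m_j$ has reached its \textbf{Configure} phase at your time $t_0$, but the load-bearing observation and the resulting bound are identical.
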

\begin{proof}
If $\gamma \cap \gamma_{m_j} = \emptyset$, $m_i$ is not influenced by $m_j$ and $L_{cs_{m_i \leftarrow m_j}} = 0$.\\
(\romannumeral1) $ \tau_{m_j} < \tau_{m_i}$: If $m_j$ starts transmission early than $m_i$, $m_i$ has to wait until $m_j$ ends its transmission and $L_{cs_{m_i \leftarrow m_j,\gamma}} \leq L_{w/oc_{m_{j}}}$. 
If $m_j$ requests links late or at the same time as $m_i$, since $ \tau_{m_j} < \tau_{m_i}$, $m_i$ does not be influenced by $m_j$.\\
(\romannumeral2) $ \tau_{m_j} \geq \tau_{m_i}$: the maximum blocking latency at the source ${m_i}$ suffers from $m_j$ equals $L_{e2e_{m_j}}$, where $L_{e2e_{m_j}} = L_{w/oc_{m_{j}}} + L_{cs_{m_{j}}}$. As illustrated in our control mechanism, if $m_j$ is blocked by another message, it ends the request for all links until that message ends its transmission. 
Thus, such indirect contention does not influence $m_i$ and $L_{cs_{m_{j}}}$ does not lengthen $L_{cs_{m_i \leftarrow m_j,\gamma}}$. Finally, $L_{cs_{m_i \leftarrow m_j,\gamma}} \leq L_{w/oc_{m_{j}}}$.
The theorem is proved.
\end{proof}

\begin{thm}
\label{them:upperbound}
Given a set of messages $\mathcal M_t$, for $m_i$, the estimated delay at the source along the route $\gamma$, $E(L_{cs_{m_i,\gamma}})$  is upper-bounded by: \\ 
\begin{center}
\vspace{-15pt}
$E(L_{cs_{m_i,\gamma}}) \leq \sum_{\forall m_j \in S} L_{w/oc_{m_j}}$, where 
$S=\left\{m_j|\gamma\cap \gamma_{m_j} \neq \emptyset, j \neq i, m_j \in \mathcal M_t\right\}$
\end{center}
\end{thm}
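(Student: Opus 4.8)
The plan is to reduce Theorem~\ref{them:upperbound} to the single-message bound of Theorem~\ref{them:case1} by a sequential-blocking argument: I would show that, in the worst case over the unknown transmission start times, $m_i$ waits at the source for at most the sum of the individual waits caused by each conflicting message, and then apply Theorem~\ref{them:case1} to every term. Since the right-hand side $\sum_{m_j \in S} L_{w/oc_{m_j}}$ consists only of design-time quantities, the same bound carries over to the estimated delay $E(L_{cs_{m_i,\gamma}})$.

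First I would identify which messages can delay $m_i$: by the $\gamma \cap \gamma_{m_j} = \emptyset$ case of Theorem~\ref{them:case1}, a message whose route is disjoint from $\gamma$ contributes nothing, and messages outside $\mathcal M_t$ are excluded by the modelling assumption stated before the theorem, so only the messages in $S$ are relevant. Next I would examine the behaviour of $m_i$ under the ``check'' rule of C4R: whenever $m_i$ finds a required link occupied, it withdraws all of its link requests and waits until the current holder releases the path, then re-checks. Each such waiting episode is therefore attributable to a distinct holder $m_j \in S$ and lasts at most the residual transmission time of $m_j$, which is precisely the quantity bounded by Theorem~\ref{them:case1}, namely $L_{cs_{m_i \leftarrow m_j,\gamma}} \le L_{w/oc_{m_j}}$. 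Because every message of $\mathcal M_t$ has exactly one transmission episode before it leaves $\mathcal M_t$, no holder is charged twice; summing over the episodes gives $L_{cs_{m_i,\gamma}} \le \sum_{m_j \in S} L_{cs_{m_i \leftarrow m_j,\gamma}} \le \sum_{m_j \in S} L_{w/oc_{m_j}}$, and taking the estimate (equivalently, the worst case over start times) on both sides proves the claim.

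The main obstacle is arguing that the waiting episodes do not compound and are not double-counted. For non-compounding I would reuse case~(\romannumeral2) of the proof of Theorem~\ref{them:case1}: a blocked message releases all its link requests, so its own source delay $L_{cs_{m_j}}$ is never inherited by $m_i$, and once $m_j$ actually owns its links it completes in $L_{w/oc_{m_j}}$ with no further contention. For the absence of double-counting I would use that arbitration is first-come-first-serve and non-preemptive and that $\mathcal M_t$ lists each message once, so once $m_j$ finishes and is removed from $\mathcal M_t$ it cannot re-block $m_i$; this rules out cyclic waiting and keeps the sum over $S$ finite. Two minor bookkeeping points remain: in a given run several links of $\gamma$ may be held by the same $m_j$ at once, which only helps (a single episode discharges several links), and some $m_j \in S$ may cause no conflict at all in a particular run, so the bound is loose — consistent with it being stated as an upper bound on the estimate rather than an equality.
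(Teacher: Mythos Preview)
Your argument is correct but it takes a somewhat different route from the paper's own proof. The paper proceeds by writing the estimated delay directly as an expectation in product form,
\[
E(L_{cs_{m_i,\gamma}}) \;=\; \sum_{m_j \in S} P(m_i \leftarrow m_j,\gamma)\, L_{cs_{m_i \leftarrow m_j,\gamma}},
\]
where $P(m_i \leftarrow m_j,\gamma)$ is the probability that $m_i$ and $m_j$ actually compete for a shared link. It then simply bounds each probability by~$1$ and each $L_{cs_{m_i \leftarrow m_j,\gamma}}$ by $L_{w/oc_{m_j}}$ via Theorem~\ref{them:case1}, and the sum follows immediately. Your proof, by contrast, works sample-path-wise: you use the ``check'' rule of C4R to partition $m_i$'s source-side wait into disjoint episodes, attribute each episode to a distinct holder in~$S$, bound each episode by Theorem~\ref{them:case1}, and only then pass to the estimate. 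What your approach buys is an explicit justification for the additive decomposition (non-compounding, no double-counting), which the paper simply assumes by writing down the product-form expectation; what the paper's approach buys is brevity, since the $P \leq 1$ step replaces your entire episode-accounting discussion in one line.
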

\begin{proof}
$E(L_{cs_{m_i,\gamma}}) = \sum_{\forall m_j \in S}P(m_i \leftarrow m_j,\gamma)L_{cs_{m_i \leftarrow m_j,\gamma}}$, where $P(m_i \leftarrow m_j,\gamma)$ is the probability $m_j$ and $m_i$ compete for the one or more links along $\gamma$ at the same time, $L_{cs_{m_i \leftarrow m_j,\gamma}}$ is the blocking time $m_i$ suffers from $m_j$ along $\gamma$. 
Since $P(m_i \leftarrow m_j,\gamma)\leq1$ and $L_{cs_{m_i \leftarrow m_j,\gamma}} \leq L_{w/oc_{m_j}}$, as illustrated in Theorem \ref{them:case1}, $E(L_{cs_{m_i,\gamma}})$ is upper-bounded by $\sum_{\forall m_j \in S} L_{w/oc_{m_j}}$.
The theorem is proved.
\end{proof}
Based on Theorem \ref{them:upperbound}, the upper bound of delay at source for one message along the route $\gamma$ is the total transmission time without blocking of all messages along this route. 
The transmission time without blocking of message $m$ is $ L_{w/oc} = L_{conf} + {L}_\textit{tr}$, where $ {L}_\textit{tr} \propto |m|$. Since $L_{conf}$ is relatively fixed and small, $ L_{w/oc} \propto |m|$, the size of message $m$. Together with Theorem \ref{them:upperbound}, $E(L_{cs_{m_i,\gamma}}) \propto \sum_{\forall m_j \in S} |m_j|$, where $S=\left\{m_j|\gamma\cap \gamma_{m_j} \neq \emptyset, j \neq i, m_j \in \mathcal M_t\right\}$. As mentioned before, we suppose that the message size $|m|$ is estimated at the design time, like the case shown in \cite{shen2010dynamic}. For a message $m_i$, we compute the $\sum_{\forall m_j \in S }|m_j|$, where $S=\left\{m_j|\gamma\cap \gamma_{m_j} \neq \emptyset, j \neq i, m_j \in \mathcal M_t\right\}$, as the cost of a candidate route $\gamma$.
Among all candidate routes, we choose the route with the minimum cost using Algorithm \ref{alg3}.
Specifically, we initialize the cost matrix in Line 2-4. 
For a vertex in cost matrix, we compute the upper bound of delay estimation for this vertex and its neighbor by accumulating the message size of all messages along this route, as indicated in Line 11. 
Then, the cost matrix is updated by checking whether the total cost can be shortened after adding the delay, shown in Line 12-13. 
The matrix update will end until reaching the destination in Line 8-9. 
Finally, we go through the previous nodes and return the route in Line 14-17.   
\begin{algorithm}
    \caption{Algorithm for General Case}
    \label{alg3}
    \KwIn{An unassigned message $m$\;}
    \KwOut{Route $\gamma$\;}
    create vertex set $Q$\;
    \For{each vertex $v$ in $\mathcal R$}
    {
        $cost_v=inf$;$prev_v=inf$;$Q.enqueue(v)$\;
        
    }
    $cost_s=0$\;
    \While{$Q \neq \phi$}
    {
        $u$ is vertex in $Q$ with the least $cost_u$\;
        $Q.dequeue(u)$\;
        \uIf{$u$ is $destination$}
        {
            break\;
        }
        \For{each neighbor $v$ of $u$}
        {
            $delay = computeCost(v,u)$\; 
            \uIf{$cost_v > cost_u + delay$}
            {
                $cost_v = cost_u + delay; prev_v = u;$
            }
        }
    }
    \uIf{$prev_{dest}$ is defined or $u = source$}
    {
        \While{$u$ is defined}
        {
            $\gamma.push(u);u=prev_u$\;
        }
    }
    Return $\gamma$\;

\end{algorithm}

\noindent  $\bullet$ \textbf{Improved Routing Algorithm for Time-Triggered Case}
\label{timetrigCase}
After task mapping, besides the source, destination and message size, the execution time $q$ can also be known, as described in \cite{silberman1996task}. 
This is common in digital signal processing, 4G and matrix multiplication. 
Also, in the time-triggered real-time system \cite{steiner2010evaluation}, the activities are initiated periodically at predetermined points.
With the execution time $q$ provided, we can compute the route and transmission period for messages accurately by loading the accurate link state of a particular period, as shown in Algorithm \ref{alg1}.
When a task $v$ is mapped onto a core at time point $t$, its messages' initialized start time $t_{start}$ is given in the time-triggered case. If no contention occurs during the transmission, the end time of the message is $t_{start}+L_{w/oc}$. Since $L_{w/oc}= L_{conf} + {L}_\textit{tr}$, where $ L_{conf}$ is upper bounded by $2 \times( |cn|+5)+|cn|$ and ${L}_\textit{tr}$ is computed by $|m|\times|p|(\times|f|-1)\times L_w+|limit|\times(L_r+L_w)$, the upper bound of $L_{w/oc}$, $Max(L_{w/oc})$ can be computed. For one message, we compute the initialized start time and the maximum non-contention end time in Line 1 and Line 2. 
The reservation list is used to save routes for all unfinished messages and their transmission periods.
By checking transmission periods stored in the reservation list, the network state with available links in that period is generated in Line 5.
In this article, we use Dijkstra's algorithm, in Line 6, to find the shortest path within a specific period.
In this period, if data can be transmitted from source to destination in such network graph, route information and this period are inserted into the reservation list in Line 7-9. 
Otherwise, since links are released if and only if the transmission finishes, the algorithm tries the next finishing time point until it finds an available route in Line 11-13.
\begin{algorithm}
\caption{Algorithm for Time-Triggered Case}
\label{alg1}
\KwIn{An unassigned message $m$, route reservation list $RL$\;}
\KwOut{Route $\gamma$ and transmission period [$t_1$,$t_2$]\;}
$t_1 = initial Start Time$\;
$t_2 = t_1 + max(L_{w/oc}) $\;
$\gamma = \phi$\;
\While{(1)}
{
    $\mathcal{G}=loadGraph (RL, t_1, t_2)$\;
    $\gamma = findRoute(m,\mathcal{G})$;\ $\textit{//}$\footnotesize{Find route from $r_src$ to $r_dst$}\;
    \If{($\gamma \neq \phi$)}
    {
        $RL.insert$($\gamma$,$t_1$,$t_2$)\;
        Return $\gamma$ and [$t_1$,$t_2$]\;
    }
    \Else
    {
        $\textit{//}$\footnotesize{Find the next time point when a route is released}\;
        $t_1 = nextRelease(t_1,ReverseList)$\;
        $t_2 = t_1 + max(L_{w/oc})$\;
    }
}
\end{algorithm}

Finally, we analyze the time complexity of these two route algorithms.
For the first routing algorithm, the total number of loops equals $|\mathcal{R}|$.
Inside it, minimum finding with time complexity $\mathcal{O}(log|\mathcal{R}|)$ and cost computation with time complexity $\mathcal{O}(|\mathcal{M}|)$ are computed.
Finally, the time complexity is $\mathcal{O}(|\mathcal{R}|(log |\mathcal{R}|+|\mathcal{M}|))$.
For the second routing algorithm, the time complexity mainly depends on its route computation algorithm.
For the Dijkstra algorithm optimized by the binary heap, the time complexity is $\mathcal{O}(|\mathcal{R}|log|\mathcal{R}|)$.
Thus, our proposed algorithm can be solved in polynomial time.
Since NoC size is limited and route computation starts after task mapping and before transmission, it is feasible for route computation to be completed before data transmission.
However, the case in which route computation takes a longer time than task computation exists. 
The route computation would compute a default route, i.e., XY route in our design, at the beginning. 
In this case, the default simple XY route is returned as the result. 
We do not prove that our algorithms are deadlock free. 
However, even if the deadlock occurs during the link arbitration, the controller can stop it by integrating the deadlock detection algorithm \cite{mak2011embedded}.
We should note that, users can design their own route computation algorithm to minimize the delay caused by contention and maximize resource utilization. 





\section{Experimental Evaluation}
\label{section:5.2-experiment}
\subsection{Experimental Setup}
\label{implementation}
\begin{table}
    \caption{NoC Configurations}
    \renewcommand{\arraystretch}{1.2}
    \arrayrulewidth=0.85pt
    \tabcolsep 3pt
    \centering
    \footnotesize
    \begin{tabular}[c]{c c<{\centering} c<{\centering}}
        \hline
        \hline
        {\bf NoC Type}&\bf{Abbr.}&\bf{Description}\\
        \hline
        SMART &S&  The SMART NoC proposed in \cite{krishna2013breaking}. \\
        SSR-Net & SSR& The SMART NoC with pre-SSR Network \cite{chen2016reducing}.\\
        SHARP&SP& The propapataion-based SSR arbitration \cite{SHARPNoC}.\\
        ArSMART&A& Our proposed NoC.\\
        \hline
        
    \hline
    \end{tabular}
    \label{NoCArc}

\end{table}

To verify the advantages of our design, we conduct experiments to compare ArSMART NoC with SMART NoC and our algorithms with other routing solutions for SMART NoC. 
The NoC designs we considered in this article are shown in Table \ref{NoCArc}. Although novel designs revise the original SMART \cite{chen2016reducing,perez2019smart++,SHARPNoC} to reduce the overhead, \eg area and energy, latency is not improved or even worse than the original design. Thus, we set the original SMART as the baseline in experiments except for overhead analysis.   
For the routing algorithm, we choose the widely applicable XY routing and the start-of-the-art contention-aware routing algorithm proposed in \cite{9045103}. 
This routing algorithm is implemented at the design time and needs the support of 2D SSR. 
Also, additional overhead should be added but we consider the ideal case without overhead here. 
The configurations of routing algorithms are listed in Table \ref{routingAlgo}. 
The default mapping algorithm is the state-of-the-art contention-aware mapping proposed in \cite{yang2017task}.
\begin{table}
    \caption{Default Configurations of Routing Algorithm}
    \renewcommand{\arraystretch}{1.0}
    \arrayrulewidth=0.85pt
    \tabcolsep 3pt
    \centering
    \footnotesize
    \begin{tabular}[c]{c c<{\centering} c<{\centering}c<{\centering}}
        \hline
        \hline
        {\bf Routing Algorithm}&\bf{Abbr.}&\bf{Classification}&\bf{Required Info.}\\
        \hline
        XY &  xy & Deterministic& $r_{src},r_{dst}$\\
        Contention-aware\cite{9045103} & O & Deterministic&$r_{src},r_{dst},|m|,q$\\
        General Case& R1  & Adaptive&$r_{src},r_{dst},|m|$\\
        Time-triggered Case& R2 & Adaptive&$r_{src},r_{dst},|m|,q$\\
          \hline
        
    \hline
    \end{tabular}
    \label{routingAlgo}
\end{table}

Since no central controlled NoC simulator is available, to verify the functionality and correctness of the micro-architectural component, we implement our NoC design in Gem5 \cite{binkert2011gem5}, which provides a simulation kernel. 
As mentioned before, the router is redesigned and the cluster controller is functionally proposed. 
The main router and controller parts are depicted in Fig. \ref{figure:implementation}. 
The dependency between each component is represented by the arrow. 
The route of the message is computed by the controller. 
As long as the transmission data is prepared in the PE's network interface (NI), the controller begins to arbitrate links for messages. 
If all links are available, the controller would reserve the link for this message and send configuration information to corresponding routers. 
Also, the routers along the route would be woken up and configured after receiving the configuration from the controller. 
Then, data traverses the link from the source to the destination. 
Finally, the controller releases the resources after transmission completion and lets them be available for other transmissions. 

\begin{figure}
\centering
\includegraphics[width=3.5in]{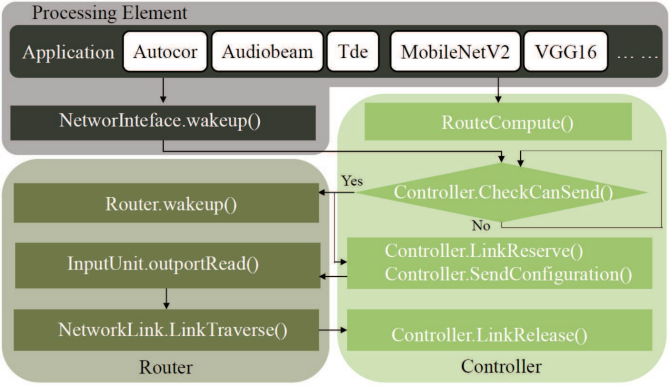}

\caption{ArSMART NoC implementation abstract}

\label{figure:implementation}
\end{figure}
\begin{table}
\label{table:paraSet}
    \caption{Default Configurations of NoC System}
    \renewcommand{\arraystretch}{1.2}
    \arrayrulewidth=0.85pt
    \tabcolsep 7pt
    \centering
    \footnotesize
    \begin{tabular}[c]{c c<{\centering} c<{\centering}}
        \hline
        \hline
        {\bf Parameter}&\bf{SMART NoC}& {\bf ArSMART NoC} \\
        \hline
        Topology &  2D Mesh & 2D Mesh \\
        NoC Size & $8 \times 8$ & $8 \times 8$ \\
        Cluster Size & - & $8 \times 8$ \\
         ${HPC}_{max}$ & 8  & 8 \\
        Flit Width  & 128-bit & 128-bit \\
        Package Size & 4 flits & 4 flits \\
        Buffer Size  & 4 flits & - \\
        VCs   & 2 VCs/port & - \\
        Router Pipeline & Three-stage & - \\
        Controller & - & 10MB\\
        L1 \& D Cache & Private, 32KB & Private, 32KB \\
        L2 Cache & Shared, 512KB/bank & Shared, 512KB/bank\\
        Frequency & 1 GHz & 1 GHz\\
        Technology & 22 nm & 22 nm\\

        \hline
        
    \hline
    \end{tabular}
    \label{parameter}
\end{table}

Our router class is derived from the original Gem5 router class, and thus, features of the Gem5 network are still available in our design, i.e., the network topology is configurable and interconnect bandwidth is changeable. 
However, since the intermediate router buffer is eliminated in our design, the buffer size configuration is disabled. The parameter setting for our experiments is shown in Table \ref{table:paraSet}. Gem5 records every event in every cycle. Combined with the energy consumption of each event simulated using Hspice in the 22-nm library, we accumulate the NoC energy consumption.

\vspace{-10pt}
\subsection{Evaluation Results for Real Applications}
Together with SMART NoC implemented using Gem5, we analyze ArSMART NoC performance using different metrics. Since the algorithm O and R2 need the information of execution time, we perform testing on a variety of streaming applications as well as AI applications, which have few branches and uncertainty during the execution.
Streaming application task graphs are generated from StreamIt benchmark \cite{thies2010empirical} and tasks are mapped using the algorithm proposed in \cite{yang2017task}. 
AI applications and their mappings are generated using Maestro \cite{kwon2018maestro}, which maximizes data reuse to decrease data movement. The cluster size for 4$\times$4, 8$\times$8 and 16$\times$16 NoC is 4$\times$4, 8$\times$8 and 8$\times$8 in this section, respectively.

\begin{figure}
\centering
\includegraphics[width=3.5in]{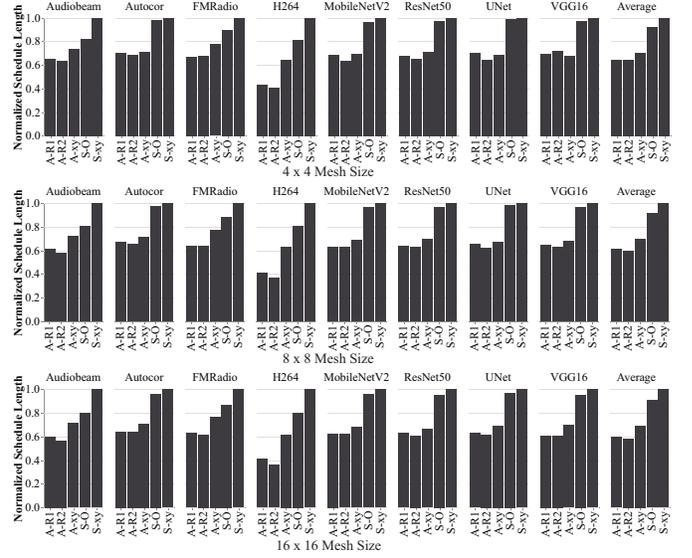}
\vspace{-20pt}
\caption{NoC performance comparison in terms of normalized total schedule length.}
\label{figure:runtime}

\end{figure}
\begin{figure}
\centering
\includegraphics[width=3.5in]{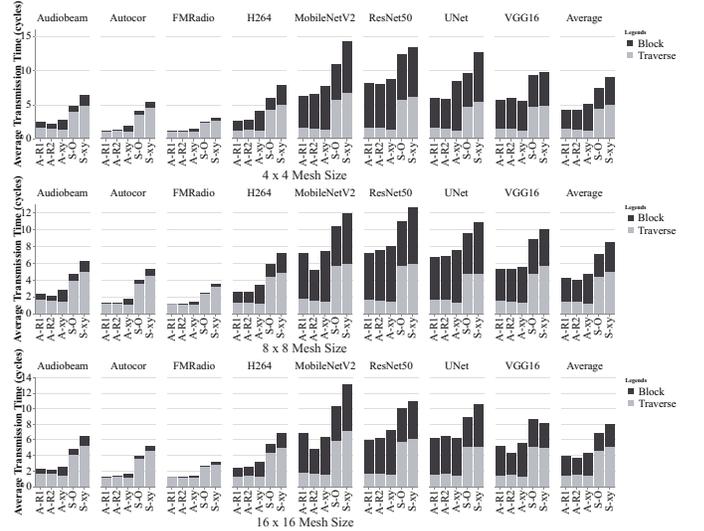}
\vspace{-20pt}
\caption{NoC performance comparison in terms of average transmission latency.}
\label{figure:stall}
\vspace{-10pt}
\end{figure}
\begin{figure}
\centering
\includegraphics[width=3.5in]{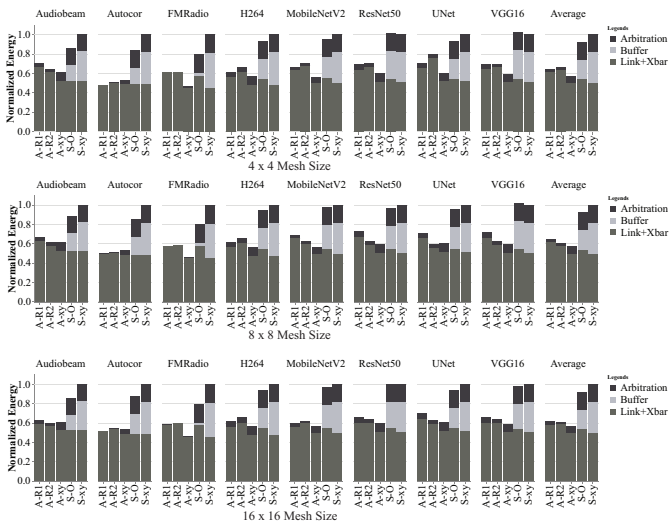}
\vspace{-20pt}
\caption{NoC performance comparison in terms of normalized total energy.}
\label{figure:energy}

\end{figure}

\begin{figure*}[t!]
\centering
\includegraphics[width=7.0  in]{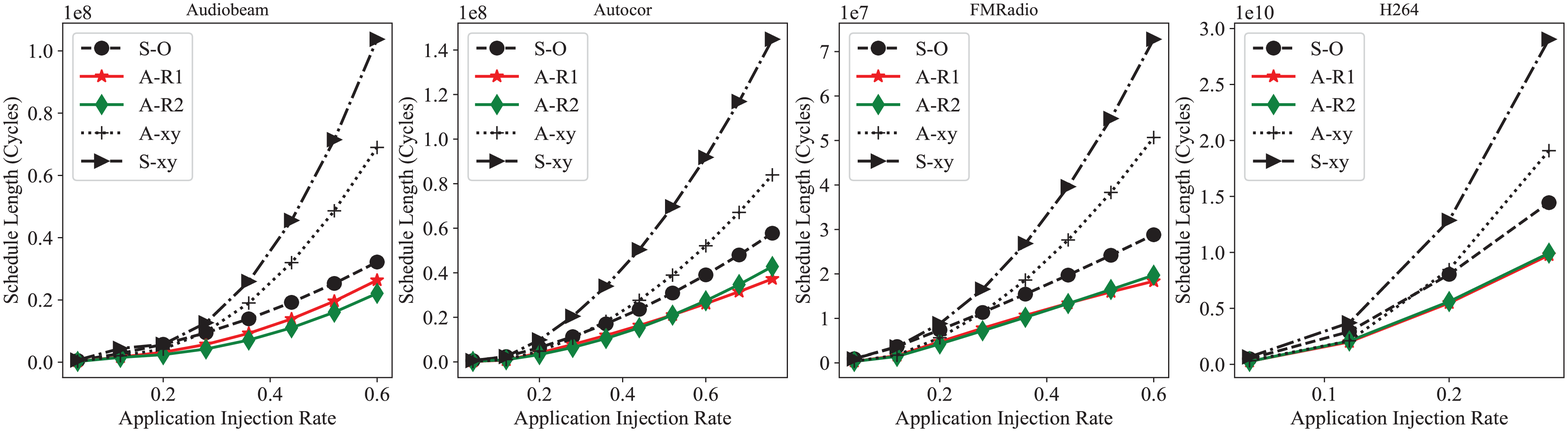}
\caption{Schedule length comparison with different application injection rates.}
\label{figure:AIR}
\vspace{-15pt}
\end{figure*}

\noindent  $\bullet$ \textbf{Schedule Length.} Fig. \ref{figure:runtime} shows the schedule length comparison for different applications. 
On average, our approach reduces 34.1\%, 39.2\% and 40.7\% total schedule length over SMART NoC for 4$\times$4, 8$\times$8 and 16$\times$16 NoC size using R1.
The performance of A-R1 and A-R2 are very close, indicating that our general algorithm is also efficient but can be used in a wider range. 
Meanwhile, our method dramatically reduces ${L}_{conf}$ and ${L}_{cs}$, resulting in performance improvement. 
With the increment of NoC size, such the improvement becomes more obvious since that more link resources can be used to establish a long-range path.
By comparing the results of S-xy and A-xy, since these two different NoCs use the same routing algorithm, we can roughly get the hardware improvement. 
We should note that such improvement is not very precise since the scheduling order changes for different cases. 
The remaining improvement is contributed by the routing algorithm.
Our technique outperforms SMART NoCs slightly on AI applications, relatively, whose task graphs are too complicated. However, thanks to the quick configuration process, we achieve at least 12.2\% of latency reduction over SMART NoC.

\noindent  $\bullet$ \textbf{Average network latency.} In order to break down the performance gain, we conduct experiments for average network latency shown in Fig. \ref{figure:stall}. 
The trend of average network latency is similar to schedule length. 
Generally, ${L}_{conf}$ is just a few clock cycles and only counts in message level rather than package level. 
For the contention delay, our method eliminates the contention at intermediate routers. 
An additional delay at the source ${L}_{cs}$ is considered. Using our routing strategies, ${L}_{cs}$ can be decreased. 
By comparing the blocking time for A-O, S-R1 and S-R2, we conclude that both of our algorithms are efficient.  

\noindent  $\bullet$ \textbf{Energy.} Energy consumption, presented in Fig. \ref{figure:energy}, of our NoC is much less than SMART NoCs using XY routing. 
On average, our design reduces 25.3\%, 27.4\% and 29.7\% energy consumption over SMART NoC for 4$\times$4, 8$\times$8 and 16$\times$16 NoC size, using R1. 
The energy deduction mainly comes from the removed buffering and decreased arbitration in message level rather than package level. 
However, since our routing algorithms adopt the arbitrary-turn route which may be longer than the route XY routing chooses, the A-R1 and A-R2 consume more energy on link and crossbar traversal than A-xy and S-xy.

\noindent  $\bullet$ \textbf{AIR.} With the increment of application injection rate (AIR), shown in Fig. \ref{figure:AIR}, the application schedule length is increased. 
We can observe that the increment of total schedule length is relatively slow using our approach. 
In particular, in the case of heavy congestion, performance improvement is obvious for our proposed routing strategies. 
This shows that our routing strategies always select a route with less contention according to network state.

\vspace{-10pt}
\subsection{Evaluation Results for Synthetic Traffics}
To further explore the advantages and limitations of ArSMART NoC, we generate several random task graphs. Default parameters for these task graphs are listed in Table \ref{taskParameter}. 
\begin{table}
    \caption{Default Parameter Settings For Synthetic Traffics}
    \renewcommand{\arraystretch}{1.2}
    \arrayrulewidth=0.85pt
    \tabcolsep 7pt
    \centering
    \footnotesize
    \begin{tabular}[c]{c c<{\centering} c<{\centering} c<{\centering}}
        \hline
        \hline
        {\bf Parameter}&\bf{Default} &\bf{Parameter}& {\bf Default} \\
        \hline
        Number of Nodes & 100 & Number of Links & 300 \\
        
        Avg. Task Volume & 8192 & Avg. Message Size & 8192 \\
        Heterogeneity Degree & 1 & Mesh Size & $8 \times 8$\\
        Package Size  & 10 flits & Mapping Algorithm & \cite{yang2017task} \\
        \hline
        
    \hline
    \end{tabular}
    \label{taskParameter}
\vspace{-10pt}
\end{table}

\begin{figure}
\centering
\includegraphics[width=3.5in]{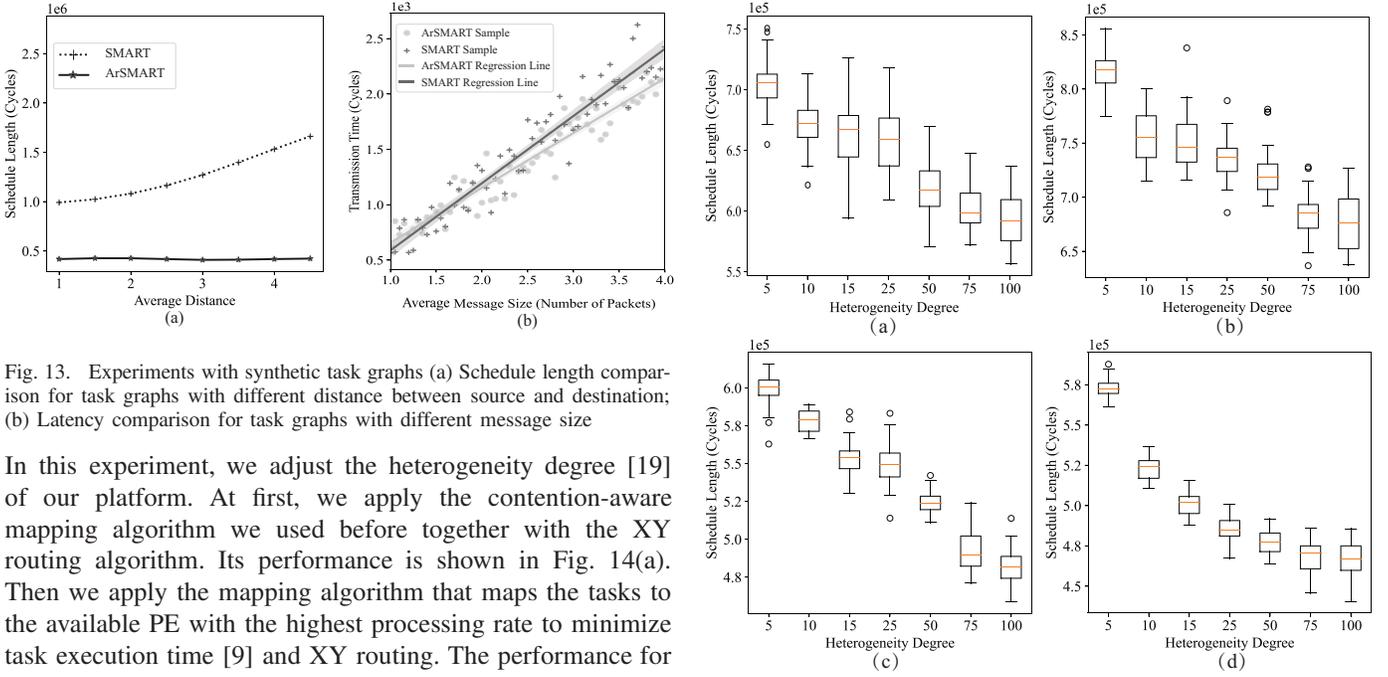}
\caption{Experiments with synthetic task graphs (a) Schedule length comparison for task graphs with different distance between source and destination; (b) Latency comparison for task graphs with different message size}
\vspace{-15pt}
\label{figure:hardware}

\end{figure}
\noindent  $\bullet$ \textbf{Distance.} In this experiment, we manipulate the task mapping algorithm to change the average distance between the source and the destination PEs from 1 to 5.
Fig. \ref{figure:hardware}(a) shows our NoC is scarcely influenced by the distance while SMART NoC latency notably increases. 
The reason for this result may be that the longer path SMART NoC wants to establish, the more possible this transmission would encounter the interruption by other messages. 
We can conclude that our proposed design is not distance sensitive, which allows the transmission with long-distance to have the same performance as the transmission between adjacent routers. 

\noindent  $\bullet$ \textbf{Message Size.} ArSMART NoC establishes the path in the message level rather than packet level so that the overhead for path establishment is amortized. 
Thus, the configuration time saving is more obvious for large message sizes. 
However, for the SMART NoC, the path configuration is conducted at the packet level, the configuration overhead is consistent for each packet, which means that the SMART NoC can process short messages more efficient than ArSMART NoC. 
To explore the message size with whom the ArSMART NoC can outperform the SMART NoC, in this experiment, we change the number of packets for each message and make the average message size change from 1 to 4.
Fig. \ref{figure:hardware}(b) shows the results for task graphs with task execution time equals to 1 and message size varies from 1 to 4. 
The regression result shows that our ArSMART NoC has a better performance for task graphs with the average message size larger than 1.67. 
However, we should note that this is not an accurate conclusion due to the influence of schedule orders.

\noindent  $\bullet$ \textbf{Heterogeneity Degree.} In previous experiments, the PEs are homogeneous and their processing rate are the same. 
For such platform, the contention-aware mapping algorithm is efficient since the objective can be transmission time minimization only. 
However, in the heterogeneous system, the problem becomes complex since the execution time should be taken into consideration, too. 
In such case, using mapping algorithms to optimize execution and routing to minimize the transmission is an efficient method to decrease overall schedule length. 
In this experiment, we adjust the heterogeneity degree \cite{ali2000task} of our platform. 
At first, we apply the contention-aware mapping algorithm we used before together with the XY routing algorithm. 
Its performance is shown in Fig. \ref{figure:software}(a). 
Then we apply the mapping algorithm that maps the tasks to the available PE with the highest processing rate to minimize task execution time \cite{holzenspies2010run} and XY routing. 
The performance for this case is presented in Fig. \ref{figure:software}(b). 
Fig. \ref{figure:software}(c) and (d) use the same mapping algorithm as (b) but apply our proposed routing algorithm R1 and R2. 
With the increment of heterogeneity degree, the schedule length for all cases decreases, due to the average processing rate of all PEs increases. 
However, the last two cases have less schedule length compared with the former two cases. 
Also, the divergence of the case (c) and (d) are less. 
This shows that the arbitrary-turn route has stable performance improvement. Through this experiment, we emphasize the importance of using arbitrary-turn route to optimize transmission performance. 

\begin{figure}
\centering
\includegraphics[width=3.5in]{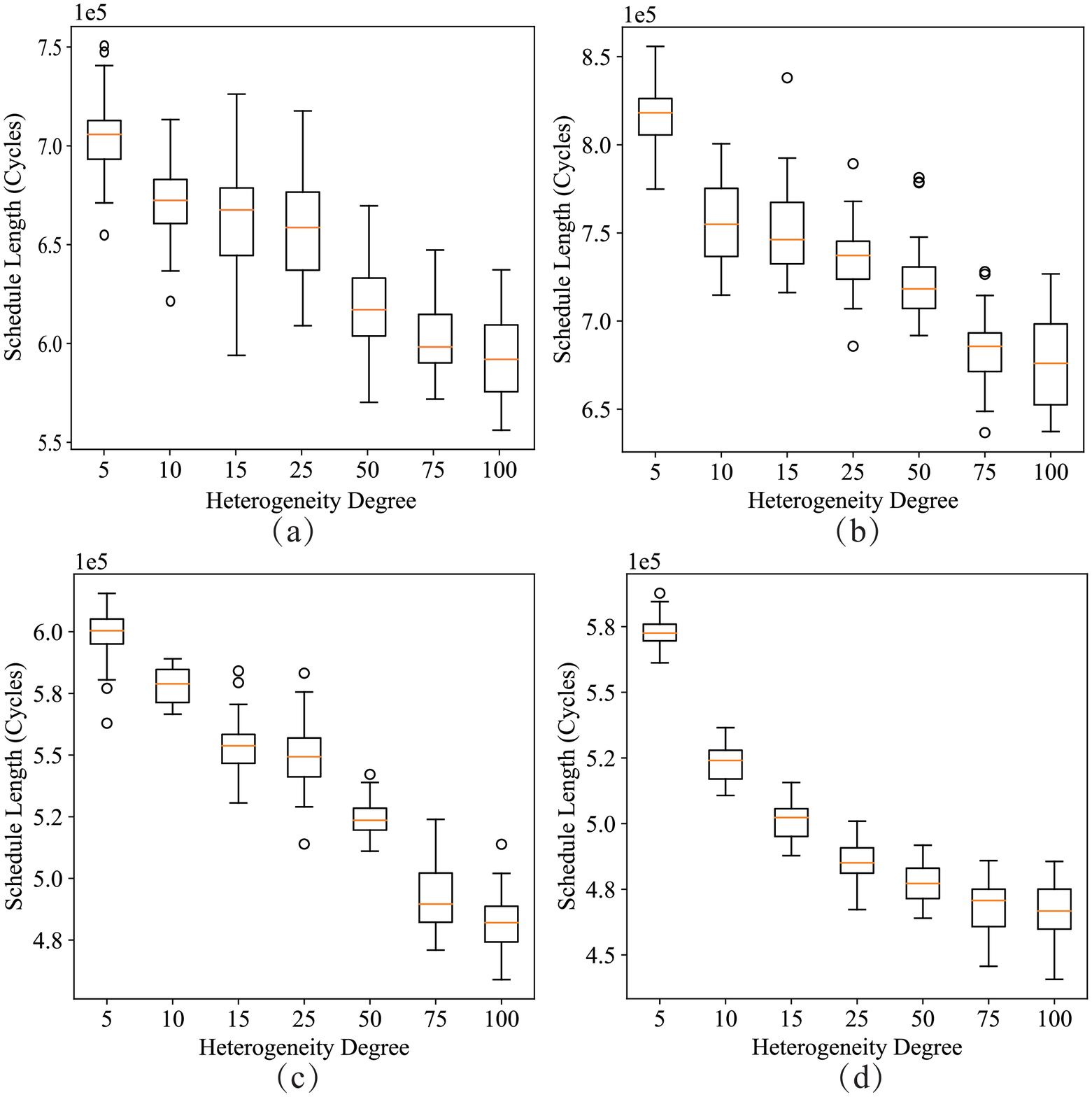}
\caption{Performance analysis with different task graphs (a) Contention-aware mapping with XY routing; (b) Computation-aware mapping with XY routing; (c) Computation-aware mapping with R1 routing; (d) Computation-aware mapping with R2 routing. }
\label{figure:software}
\vspace{-15pt}
\end{figure}

\vspace{-10pt}
\subsection{Overhead Analysis}
In this section, we compare ArSMART to original SMART and improved SMART designs, SSR-Net \cite{chen2016reducing} and SHARP NoC \cite{SHARPNoC}, in terms of area and power. All of these NoCs apply the 2D configuration. The SMART suffers from high overhead issues since each router must consider all SSRs from upstream routers. At most $\textit{HPC}_{max}(2 \textit{HPC}_{max} -1)$ SSRs at each input port are needed, which largely increases wire and arbitration logic area as well as power consumption. SSR-Net, in which an auxiliary SSR network is used, significantly reduces the wire area. Based on this design, SHARP NoC is proposed to eliminate the quadratic arbitration by the propagation-based SSR arbitration mechanism. The tool we used, called DSENT \cite{dsent}, for router area and power estimation is the same as \cite{chen2016reducing,SHARPNoC} for a fair comparison. The technology class we used in DSENT is 22nm. The storage overhead for SMART, SHARP and SSR is the overhead for the buffer, while for ArSMART is the delay register. Since the message arbitration is performed in the controller rather than the router, the arbitration overhead of ArSMART includes the decoding and configuration overhead.

\noindent  $\bullet$ \textbf{Power.} 
In \fig \ref{figure:area}(a) the router dynamic power for ArSMART, SHARP, SMART and SSR with traffic load of low, medium and high are shown. In this experiment, the $\textit{HPC}_{max} = 6$ is applied since this is the best achievable values for SHARP under our proposed system configuration. When the traffic load is low, ArSMART reduces about 68\% of the power consumption compared with the original SMART design. With the increment of traffic load, the buffer and arbitration power consumption becomes more dominating. Thanks to our ``blocking at source'' design, the power of storage is greatly reduced. The simplified router design helps ArSMART reduce the arbitration overhead. Under the high traffic load condition, our design can reduce about 70\% of power over SMART NoC.

\noindent  $\bullet$ \textbf{Area.} Finally, the area overhead is analyzed. In \fig \ref{figure:area}(b) the router area for SHARP, ArSMART, SMART and SSR with respect to $\textit{HPC}_{max}=2,4,6,8$ are shown. Since up to $\textit{HPC}_{max}(2 \textit{HPC}_{max} -1)$ SSRs need to be arbitrated in every port for SMART and SSR design, the quadratic increment in arbitration area for these two cases is observed accordingly. Although SHARP decreases arbitration overhead obviously, it cannot support high $\textit{HPC}_{max}$. The maximum $\textit{HPC}_{max}$ SHARP support is only 6. Since in Our ArSMART, the control plane and data plane are separated, the arbitration control is relatively low compared with the other distributed SMART designs (\ie SMART, SSR, SHARP). Also, our arbitration overhead is not scaled up with the increment of $\textit{HPC}_{max}$. Moreover, since in our design, only one delay register is needed for each port, the buffer area is also decreased. The router area overhead of ArSMART is 2.6x-6.8x the original SMART.

\begin{figure}
\centering
\includegraphics[width=3.5in]{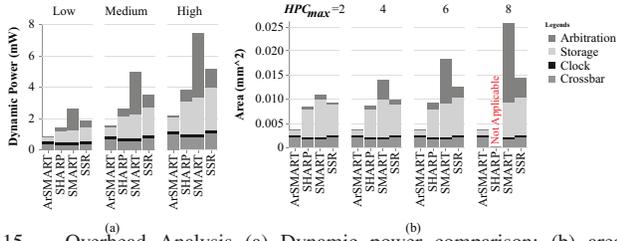}
\vspace{-25pt}
\caption{Overhead Analysis (a) Dynamic power comparison; (b) area comparison.}
\label{figure:area}
\vspace{-15pt}
\end{figure}

\vspace{-10pt}
\section{Related Works}
\label{section:2-relatedwork}
In this section, we discuss related NoCs designs to reduce latency, improve energy efficiency or decrease area overhead.

\noindent  $\bullet$ \textbf{SMART NoC.} 
SMART NoC \cite{krishna2013breaking,chen2013smart,krishna2013single} is proposed to reduce the end-to-end latency by enabling single-cycle multi-hop traversal.
In \cite{chen2016reducing}, a control network was proposed to reduce wire and energy overhead of the original SMART NoC. 
Generally, this solution reduces ${L}_{head}$ but ${L}_{ct}$ has not been solved effectively. 
Moreover, extra arbitration may be added for bypassing signals. 
The most advanced SMART design, SHARP NoC \cite{SHARPNoC}, is proposed to eliminate the quadratic arbitration by the propagation-based SSR arbitration mechanism.   

\noindent  $\bullet$ \textbf{SDNoC.} 
To support arbitrary-turn routing which has no constraints on routing decision, software-defined NoC (SDNoC) design has been proposed in \cite{cong2014configurable}. 
However, these approaches focus on updating the flow table of the router, which increases the complexity of the arbitration dramatically. 
Finally, it may decrease ${L}_{ct}$ by adopting excellent contention minimize algorithm we mentioned before, but $t_r$ increased so that the overall performance is degraded. 
Software-defined circuit switching NoC \cite{ruaro2017sdn,ruaro2019distributed,ruaro2018software,ellinidou2018sdn} uses the controller to configure the transmission route. 
However, these researches use additional packet switching to transmit data packets when there is no path in SDNoC and apply hop-by-hop configuration and transmission, adding additional overhead to this design.

\noindent  $\bullet$ \textbf{Bufferless NoC.} 
On-chip router buffers put pressure on the area and power constraints for NoC. 
Research \cite{hoskote20075} shows that 22\% of router power is consumed by network buffering resources. \cite{moscibroda2009case} describes new algorithms to route packets without buffering.
By controlling the injection rate and deflecting flits to undesired ports, buffers can be eliminated. 
In \cite{hayenga2009scarab}, SCARAB shows a single-cycle bufferless router design. 
Together with a processor-side buffered router, it reduces the possibility of packet drops and re-transmission costs. 
Improved bufferless router designs \cite{picornell2019dcfnoc,wang2019surf} have been proposed. 
However, since no buffer can temporarily hold packets, packets in bufferless-routing have to keep moving in the links. This may cause additional latency overhead.

\noindent  $\bullet$ \textbf{Application-specific NoC.}  
Application-specific NoCs \cite{jackson2010skip,modarressi2010virtual,sewell2012swizzle} generate NoCs in accordance with the application’s communication graph that is known apriori. 
For AI applications, the novel NoC design \cite{kwon2017rethinking} is proposed to boost communication performance for spatial neural network accelerator is presented.
However, fixed NoC architecture cannot benefit all kinds of applications.
Such configurable NoC does not support dynamic change in the run-time, which means the transmission pattern change cannot be handled in this model. 
However, the traffic during the run-time is not static; it varies phase by phase and is dependent on the mapping of the dataflow over PEs, and the input parameters.

The aforementioned approaches use different technologies and have their specific benefits, as we list in Table \ref{comparison}.
We compare them in four metrics, low latency, low cost (i.e., power and area), high adaptability and high adaptability. As we can see from the table, these techniques cannot dominate with each other regarding these costs and benefits. 
Bypassing intermediate routers (i.e., SMART NoC) and applying routing algorithms with arbitrary-turn (i.e., SDNoC), ``A-route'' for simplicity, to get low latency needs additional control network, which let the energy consumption increased. 
The methods with low hardware cost (i.e., Bufferless NoC) cannot meet the latency constraints. 
The application-specific NoC, denoted as ``App. Specific'' in the table, which designs NoC by adding additional links at the design time is hard to develop and adapt for software updates. 
Also, this method cannot handle the transmission pattern change during the run-time.
\begin{table}
    \caption{NoC Comparison}
    \renewcommand{\arraystretch}{1.2}
    \arrayrulewidth=0.85pt
    \tabcolsep 3.3pt
    \centering
    \footnotesize
    \label{comparison}
    \begin{tabular}{ p{2.1cm} c<{\centering}c<{\centering} p{0.6cm}<{\centering}  c<{\centering} c<{\centering}}
        \hline
        \hline
         \multirow{2}{*}{{ \bf NoC}} & \multicolumn{2}{c}{\bf Low latency}
          & \multirow{2}{*}{ \makecell[c]{{\bf Low }\\ {\bf Cost}}}  &  \multirow{2}{*}{\makecell[c]{{\bf High }\\ {\bf Adaptability}} }& \multirow{2}{*}{\makecell[c]{{\bf High }\\ {\bf Generality}} }\\
          ~ & {\bf Bypass} & {\bf A-route} & ~ & ~ & ~\\
          
        \hline
        Traditional  & $\times$  & $\times$  & $\times$  & $\checkmark$& $\checkmark$\\
        SMART & $\checkmark$  & $\times$  & $\times$  & $\checkmark$& $\checkmark$  \\
        SDN  & $\times$  & $\checkmark$  & $\times$  & $\checkmark$ & $\checkmark$\\
        Bufferless & $\times$  & $\times$  &  $\checkmark$  & $\checkmark$ & $\checkmark$\\
        App. Specific & $\checkmark$  & $\times$  &  $\checkmark$ &$\times$ &$\times$\\
        { \bf Our ArSMART} & {\bf $\checkmark$}  & $\checkmark$  & $\checkmark$  & $\checkmark$ &$\times$\\
        \hline
        
    \hline
    \end{tabular}
    
        \vspace{-15pt}
\end{table}

\section{Conclusion}

\label{section:6-conclusion}
In this article, we proposed an NoC design, ArSMART NoC, which supports single-cycle long-distance data transmission among many cores. Using a cluster-based control method, ArSMART NoC supports any arbitrary-turn route which can decrease contentions.
By configuring the routers directly, we presented a method to setup arbitrary-turn routes from a source to a destination within a very small number of cycles. 
We also introduced routing algorithms to further decrease communication contention. 
Compared with the high-performance SMART NoC, we have decreased 40.7\% application schedule length and 29.7\% in energy consumption on average. Thanks to the simplified buffer and arbitration components, our design reduced the router area and power consumption compared with the start-of-art overhead-aware SMART NoC designs. In the current work, we demonstrate the software design of the controller. Considering the high energy efficiency and performance provided by ASIC, we plan to design specific hardware to implement and accelerate the function of the ArSMART controller.

\section*{Acknowledgments}
This work is partially supported by the Ministry of Education, Singapore, under its Academic Research Fund Tier 2 (MoE2019-T2-1-071) and Tier 1 (MoE2019-T1-001-072), and Nanyang Technological University, Singapore, under its NAP (M4082282) and SUG (M4082087).

\bibliographystyle{IEEEtran}
\bibliography{IEEEabrv,reference}

\end{document}